%
%
%
%
%
\documentclass[twocolumn]{svjour3}          
\smartqed  
\usepackage{graphicx}
\usepackage{amsmath}
\usepackage{enumitem}

%
%
%
\begin{document}

\title{A Novel Observer Design for LuGre Friction Estimation and Control}


\author{Caner Odabaş         \and
        Ömer Morgül 
}


\institute{C. Odabaş \at
              Aselsan Radar, Electronic Warfare and Intelligence Systems Division, 06830, Ankara, Turkey \\
              \email{codabas@aselsan.com}           
           \and
           Ö. Morgül \at
              Department of Electrical and Electronics Engineering, Bilkent University, 06800, Ankara, Turkey
              \email{morgul@ee.bilkent.edu.tr}  
}


\maketitle

\begin{abstract}
    Dynamic components of the friction may directly impact the stability and performance of the motion control systems. The LuGre model is a prevalent friction model utilized to express this dynamic behavior. Since the LuGre model is very comprehensive, friction compensation based on it might be challenging. Inspired by this, we develop a novel observer to estimate and compensate for LuGre friction. Furthermore, we present a Lyapunov stability analysis to show that observer dynamics are asymptotically stable under certain conditions. Compared to its counterparts, the proposed observer constitutes a simple and standalone scheme that can be utilized with arbitrary control inputs in a straightforward way. As a primary difference, the presented observer estimates velocity and uses the velocity error to estimate friction in addition to control input. The extensive simulations revealed that the introduced observer enhances position and velocity tracking performance in the presence of friction. 
\keywords{Asymptotic stability  \and friction observer \and LuGre model \and friction compensation \and Lyapunov function}
\end{abstract}

\section{Introduction}\label{Intro}
According to da Vinci, friction is an opposing force to motion. Also, it is proportional to the normal force and independent of the contact area. However, Coulomb proposed the first mathematical representation. In the reputed Coulomb model, friction is also independent of the velocity magnitude. In reality, friction is a more complex nonlinearity having multi-stage
characteristics. For instance, friction force at rest is higher than the Coulomb friction level and this characteristic is called stiction. In this phase, even if the system preserves its position, a microscopic presliding motion occurs. In this case, the minimum force required to
generate movement is defined as the break-away force. A more realistic and continuous friction model includes the Stribeck effect. This component is dominant at low velocities and
becomes minimal at a certain velocity called Stribeck velocity. In detail, friction decreases continuously starting from the stiction level while velocity increases. Likewise, the lubricity between two contacting surfaces influences friction behavior. This phenomenon is proportional to the velocity and defines viscous friction. In brief, a classical friction model can be acquired by integrating the stiction, viscous coefficient, and Stribeck effect to the Coulomb model. Nevertheless, the classical model represents the steady-state friction characteristics well since it portrays only the sliding phase. On the contrary, the dynamic models, which include some time-varying parameters, can express the pre-sliding phase of motion to a better extent.
Hence, many dynamic friction models have been developed in the literature. Although the comparison and details of these models are out of the scope of this paper, interested readers may refer to \cite{Marques_ND.2016} and the references therein. This study focuses on a benchmark dynamic friction model presented by researchers from the universities of Lund and Grenoble \cite{Canudas_IEEETAC.1995}. According to proposed LuGre model, when a
tangential force is applied, the elastic bristles connecting the moving object and asperities start to bend like a spring-mass system. If the force is sufficiently large, deflected bristles slip off each other randomly. New contacts are formed and slipped off repetitively throughout the motion, generating friction. As a result, the LuGre model promisingly expresses position-dependent pre-sliding behaviors such as friction lag, break-away force or stick-slip motion.

In addition to modeling, friction identification is another comprehensive and crucial research area in friction compensation since the desired system response may only be achieved when model parameters coincide with the existing friction coefficients. Sometimes, the identification procedure might be very demanding and require diverse experimental analyses or exhaustive setups. Moreover, friction parameters might vary in time due to environment, temperature, material properties, position, etc. Therefore, friction observer designs are prevalent in the literature to improve the performance of systems. While some observer estimations require elementary identification processes, some exhibit a fully adaptive nature. For instance, simulations and experiments revealed that although the Friedland-Park observer is designed to aim at Coulomb friction, it enhances the tracking response \cite{Friedland_IEEETAC.1992} even if the actual friction exhibits nonlinear characteristics which cannot be confined to the Coulomb Model. Based on some assumptions, \cite{Tafazoli_IEEETCST.1998} provides a guideline to determine design parameters in Friedland-Park Observer and makes a velocity estimation in addition to friction. Also, \cite{caner_tez.2014} extends the initial Friedland-Park Observer to a general structure and establishes the necessary and sufficient conditions for the adaptation function. \cite{caner_makale.2014} modifies the original Friedland-Park observer for systems with time delay. Similarly,  \cite{Ruderman_IEEETIE.2014} devises a feed-forward friction observer for the motor drive plant with a non-negligible time delay under a two-state dynamic friction model with elastoplasticity. For the same dynamic friction model, \cite{Ruderman_IEEETIE.2015} investigates the viscous and Coulomb friction uncertainties and derives an exponentially stable observer discussing the minimum necessary system identification. In another approach, \cite{Freidovich_IEEETCST.2010} considers a feedback system designed without disturbance concerns. Assuming that the LuGre model adequately exhibits the disturbance characteristics within the system, \cite{Freidovich_IEEETCST.2010} introduces a Lyapunov-function-based LuGre friction observer. Alternatively, for a robotic manipulator with LuGre friction, \cite{xu_IJCAS.2022} implements a backstepping sliding mode control based on an extended state observer. In this case, the Lyapunov criterion is applied to obtain a backstepping method satisfying global asymptotic stability. Lastly, \cite{chen_IJRNC.2022} entirely focuses on the controller design for mechanical systems with LuGre friction. To this end, they exhibit an adaptive repetitive learning control method for periodic reference signals and an adaptive dynamic state feedback control for constant reference inputs when all coefficients of the LuGre model are unknown. In this paper, we exhibit a novel standalone LuGre friction observer by considering a basic mechanical structure with LuGre friction and show that under certain conditions the error dynamics is asymptotically or exponentially stable.  

This paper is organized as follows. In the next section, we share the definition of the problem and explain the motivation behind this study. Then, we briefly mention a prominent LuGre friction observer structure and introduce a novel design. Also, we discuss the contributions of the presented LuGre friction observer and provide a detailed stability analysis. In Section 4, we perform simulations to demonstrate the effectiveness of the proposed friction compensation scheme under different scenarios and then give some concluding remarks.

\section{Problem Formulation}
Motivated by classical mechanical structures, we consider the following system:
\begin{equation}
\label{mechanical_sys1}
J\dot{w}=-F+u,
\end{equation}
where $J$ represents inertia, $w$ represents the (linear or angular) velocity, $F$ represents the friction term and $u$ is the applied input (force or torque). If $\hat{F}$ is an estimation of the actual friction term $F$, one natural control methodology would be to choose the control input as $u=\hat{F}+v$ where $v$ is the new control input. If we have $\hat{F}=F$, then (\ref{mechanical_sys1}) reduces to:
\begin{equation}
\label{sys1}
J\dot{w}=v.
\end{equation} 
Obviously, the system given by (\ref{sys1}) can be controlled by standard techniques to achieve various objectives such as stabilization, tracking, etc. Nevertheless, if $\hat{F} \neq F$, we obtain
\begin{equation}
\label{mechanical_sys2}
J\dot{w}=-F+\hat{F}+v.
\end{equation} 

Depending on the properties of the friction estimation $\hat{F}$, the control law $v$ used in an ideal (frictionless) case given by (\ref{sys1}) to solve some particular control problem could also be used in (\ref{mechanical_sys2}) to solve the same problem. We will utilize this approach in our work, which could be called a controller design based on friction compensation.
The estimation $\hat{F}$ naturally depends on the friction model $F$ utilized in (\ref{mechanical_sys1}). Especially in this work, we will consider the LuGre model given:
\begin{eqnarray}
\label{LuGre1}
\dot{z} &=&w-\sigma_{0}\frac{|w|}{h(w)}z,\\
F&=&\sigma_{0}z+\sigma_{1}\dot{z}+F_{v}w. \label{LuGre2} 	
\end{eqnarray} 
Here, $z$ is a non-measurable internal state describing the bristles' relative deflection during the stiction phases  \cite{canudas_IFAC.1996}. Also, $\sigma_{0}$ and $\sigma_{1}$ can be considered as the stiffness coefficient of the microscopic deformations of $z$ and the damping coefficient associated with $\dot{z}$. A particular choice of $h(w)$ is given as 
\begin{equation}
\label{Lugre3}
h(w)=F_{c}+(F_{s}-F_{c})e^{-\left( \frac{w}{w_{s}}\right) ^{2}},
\end{equation} 
where $F_{c}$, $F_{s}$ and $w_{s}$ represent Coulomb friction, stiction and Stribeck velocity, respectively. Without loss of generality, we assume that the viscous friction coefficient, $F_{v}$ is exactly known. Note that if $\dot{z} \to 0$, then $z \to h(w)sgn(w)/\sigma_{0}$. This leads to the classical static friction model given as
\begin{equation}
\label{Lugre_ss}
F=F_{c}sgn(w)+(F_{s}-F_{c})e^{-\left( \frac{w}{w_{s}}\right) ^{2}}sgn(w)+F_{v}w. 
\end{equation} 
\begin{figure}[tb]
	\centering
	\includegraphics[width=1\columnwidth]{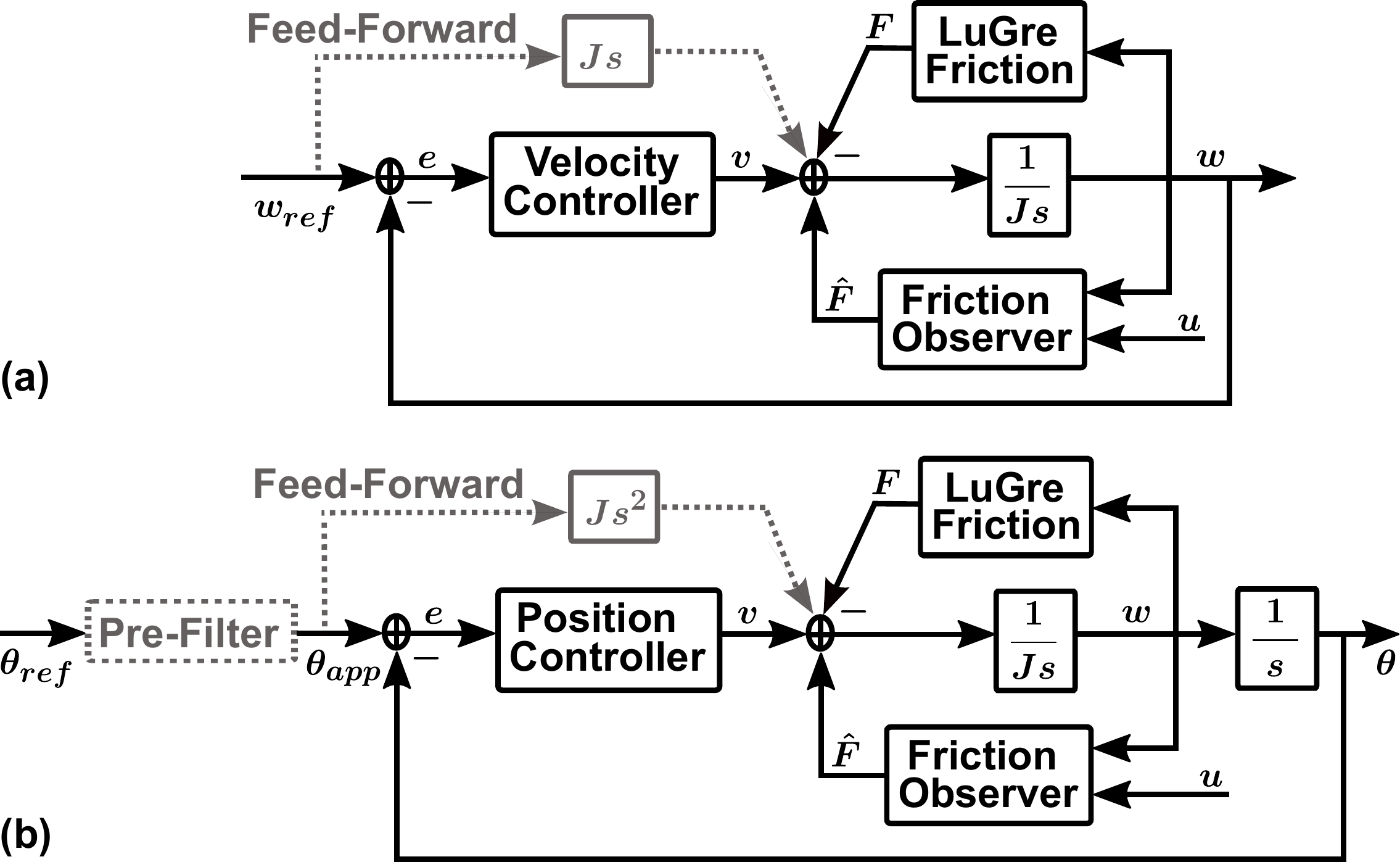}
	\caption{\label{sys} Block diagrams of feedback systems under under LuGre Friction. (a) Velocity control. (b) Position control. }
\end{figure}

To obtain a reasonable estimation for the friction term given by (\ref{LuGre2}), we need estimate $z$ first. Considering (\ref{LuGre1}), the state $z$ is not measurable while velocity $w$ is. Our aim is to obtain a state observer to estimate $z$ in (\ref{LuGre1}) and friction given by (\ref{LuGre2}). Then, we will discuss the effect of this estimation on the friction compensation-based controller design alluded to above. To summarize, our motivation is to propose a novel stand-alone LuGre friction observer design; therefore, we consider both velocity and position loop structure, as shown in Fig. 1. Certainly, the pre-filter and feed-forward terms, represented by dashed gray, are optional components that can enhance the tracking performance of the system. Although the introduced observer scheme does not require a feed-forward term and particular controller design constraints, in the literature, some observer structures do.

\section{Observer Structures}
\subsection{A Brief LuGre Friction Observer Survey}
Assume that for the system given by (\ref{mechanical_sys1}), (\ref{LuGre1})-(\ref{LuGre2}), $w(.)$ is measured. To estimate $z$ given by (\ref{LuGre1}), a natural choice is to use its exact copy. In this case, deflection and friction estimations are given as
\begin{eqnarray}
\label{z_hat_dot}
\dot{\hat{z}}&=&w-\sigma_{0}\frac{|w|}{h(w)}\hat{z}, \\
\label{F_est}
\hat{F}&=&\sigma_{0}\hat{z}+\sigma_{1}\dot{\hat{z}}+F_{v}w. 
\end{eqnarray}
This natural observer is first given in \cite{Gafvert_ECC.1999}. To analyze the properties of the natural observer given by (\ref{z_hat_dot}), let us define the deflection estimation error as $e_{z}=z-\hat{z}$.
By using (\ref{LuGre1}) and (\ref{z_hat_dot}), we obtain:
\begin{equation}
\label{err_z_dot}
\dot{e}_{z}=-\sigma_{0}\frac{|w|}{h(w)}e_{z}. 
\end{equation}
Similarly, friction estimation can be defined as $e_{f}=F-\hat{F}.$ By using (\ref{LuGre2}), (\ref{z_hat_dot}) and (\ref{F_est}), we obtain:
\begin{equation}
\label{F_err}
e_{f}=\sigma_{0}e_{z}+\sigma_{1}\dot{e}_{z}=\sigma_{0}\left(1-\frac{|w|}{h(w)} \right)e_{z}.
\end{equation}
Under reasonable conditions on $w(.)$, one can prove the asymptotic stability of the error dynamics in (\ref{err_z_dot}). To see that, let us define the Lyapunov function, $
V=\frac{1}{2}e^{2}_{z}.$ Then, by using (\ref{err_z_dot}), we obtain:
\begin{equation}
\label{V_dot}
\dot{V}=-\sigma_{0}\frac{|w|}{h(w)}e_{z}^{2}.
\end{equation}
This proves that $V$ is bounded, hence $e_{z} \in\mathbf{ L_{\infty}}$. However, since (\ref{err_z_dot}) is a time-varying system if $w(.)$ is a function of time, we cannot utilize LaSalle's Invariance Principle. Hence, asymptotic stability does not follow directly from (\ref{V_dot}). The stability properties of the error system given by (\ref{err_z_dot}) naturally depend on the properties of $w(.)$. Two such stability results that immediately follow from (\ref{err_z_dot}) are given below.
\begin{lemma}\label{lemma1}
	Assume that $| w(t) | \geq \alpha$ for some $\alpha>0$. Then, the error system in (\ref{err_z_dot}) is exponentially stable.
\end{lemma}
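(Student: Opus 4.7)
The plan is to solve the scalar linear time-varying ODE in (\ref{err_z_dot}) explicitly and then bound the time-varying coefficient from below. First I would note that (\ref{err_z_dot}) has the form $\dot{e}_z = -\beta(t)\, e_z$ with $\beta(t) := \sigma_0 |w(t)|/h(w(t))$, so the exact solution is
\begin{equation*}
e_z(t) = e_z(0)\,\exp\!\Bigl(-\!\int_0^t \beta(s)\,ds\Bigr).
\end{equation*}
Exponential stability therefore reduces to establishing a uniform positive lower bound on $\beta(t)$.

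Next I would bound $h(w)$ from above. From (\ref{Lugre3}), and under the physically standard assumption $F_s \geq F_c > 0$, the factor $(F_s - F_c)e^{-(w/w_s)^2}$ is nonnegative and attains its maximum at $w = 0$. Hence $F_c \leq h(w) \leq F_s$ for every $w$, so $h(w)$ is uniformly bounded above by $F_s$. Combined with the hypothesis $|w(t)| \geq \alpha$, this gives
\begin{equation*}
\beta(t) = \sigma_0 \frac{|w(t)|}{h(w(t))} \;\geq\; \frac{\sigma_0 \alpha}{F_s} \;=:\; \lambda > 0
\end{equation*}
for all $t \geq 0$.

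Substituting this lower bound into the integral representation yields $|e_z(t)| \leq |e_z(0)|\, e^{-\lambda t}$, which is the desired exponential decay with rate $\lambda = \sigma_0 \alpha / F_s$ independent of initial conditions. Since the system is scalar and the argument is essentially an application of the comparison/Grönwall inequality on a linear ODE with a sign-definite coefficient, there is no real obstacle; the only subtlety is making explicit the uniform upper bound on $h(w)$, which relies on the standing assumption $F_s \geq F_c$ built into the LuGre parameterization (\ref{Lugre3}). If one wished to avoid even that mild assumption, one could instead use $h(w) \leq \max\{F_c, F_s\}$ and carry the same argument through verbatim.
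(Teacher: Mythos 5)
Your proof is correct and rests on the same key inequality as the paper's: bounding $h(w)$ above by $C_2=\max\{F_c,F_s\}$ and combining with $|w(t)|\geq\alpha$ to get a uniform decay rate $\sigma_0\alpha/C_2$. The only cosmetic difference is that you integrate the scalar ODE explicitly while the paper packages the same estimate as the Lyapunov inequality $\dot{V}\leq-\tfrac{2\sigma_0\alpha}{C_2}V$ with $V=\tfrac{1}{2}e_z^2$, which for a scalar linear system is equivalent.
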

\begin{proof}\label{proof1}
	It follows from (\ref{Lugre3}) that
	\begin{equation}
	\label{proof1_eq}
	C_{1} \leq h(w) \leq C_{2},~~~~~~~~ \forall w.
	\end{equation}
	where $C_{1}=min\left\lbrace F_{s}, F_{c}\right\rbrace \geq 0$, $C_{2}=max\left\lbrace F_{s}, F_{c}\right\rbrace>0$. It then follows from (\ref{V_dot}) that
	\begin{equation}
	\label{proof1_eq2}
	\dot{V} \leq -\frac{\sigma_{0}\alpha}{C_{2}}e^{2}_{z}=-\frac{2\sigma_{0}\alpha}{C_{2}}V.
	\end{equation}
	Exponential stability follows from (\ref{proof1_eq2}). See also \cite{odabas_TIMC.2021,odabas_IJC.2023}
\end{proof}
\begin{lemma}\label{lemma2}
	Assume that the following is satisfied for some $T>0$ and $\beta>0.$
	\begin{equation}
	\label{lemma2_eq}
	\int_{t}^{t+T}|w(\tau)|d\tau \geq \beta,~~~~~~~~ \forall t \geq 0.
	\end{equation}
	Then, the error dynamics in (\ref{err_z_dot}) is exponentially stable.	
\end{lemma}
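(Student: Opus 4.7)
The plan is to exploit the scalar, separable nature of the error equation (\ref{err_z_dot}) and convert the integral bound on $|w|$ supplied by the hypothesis into an exponentially decaying bound on $|e_z(t)|$. The key observation is that a scalar linear ODE with a nonnegative time-varying rate admits a closed-form solution, so exponential stability reduces to showing that the accumulated rate grows at least linearly in $t$.

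First, I would integrate (\ref{err_z_dot}) directly to obtain
\begin{equation*}
e_{z}(t)=e_{z}(0)\exp\!\left(-\sigma_{0}\int_{0}^{t}\frac{|w(\tau)|}{h(w(\tau))}\,d\tau\right).
\end{equation*}
Using the upper bound $h(w)\leq C_{2}$ already established in the proof of Lemma~\ref{lemma1} (see (\ref{proof1_eq})), the integrand satisfies $|w|/h(w)\geq |w|/C_{2}$, so it suffices to lower-bound $\int_{0}^{t}|w(\tau)|\,d\tau$.

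Next, I would use the persistence-of-excitation hypothesis (\ref{lemma2_eq}) to do exactly that. Write $t=nT+r$ with $n=\lfloor t/T\rfloor$ and $0\leq r<T$, and decompose $[0,t)$ into the $n$ disjoint subintervals $[kT,(k+1)T)$, $k=0,\dots,n-1$, plus the remainder $[nT,t)$. On each of the $n$ full subintervals, (\ref{lemma2_eq}) gives $\int_{kT}^{(k+1)T}|w(\tau)|\,d\tau\geq\beta$, so
\begin{equation*}
\int_{0}^{t}|w(\tau)|\,d\tau\;\geq\;n\beta\;\geq\;\left(\frac{t}{T}-1\right)\beta.
\end{equation*}
Combining the two bounds yields
\begin{equation*}
|e_{z}(t)|\;\leq\;|e_{z}(0)|\,e^{\sigma_{0}\beta/C_{2}}\,\exp\!\left(-\frac{\sigma_{0}\beta}{C_{2}T}\,t\right),
\end{equation*}
which is exponential decay with rate $\sigma_{0}\beta/(C_{2}T)$, as required.

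I do not expect any real obstacle here: the argument is essentially a direct integration plus a telescoping estimate. The only subtlety is the bookkeeping for the trailing partial interval of length $r<T$, which is handled cleanly by the $\lfloor t/T\rfloor\geq t/T-1$ bound and absorbed into the constant prefactor $e^{\sigma_{0}\beta/C_{2}}$. Note that Lemma~\ref{lemma1} is actually the special case where $|w|\geq\alpha$ makes the PE inequality hold with $T$ arbitrary and $\beta=\alpha T$, which is a useful sanity check on the rate obtained above.
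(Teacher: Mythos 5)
Your proof follows essentially the same route as the paper: write the explicit solution of (\ref{err_z_dot}), use the bound $h(w)\leq C_{2}$ from (\ref{proof1_eq}), and invoke the hypothesis (\ref{lemma2_eq}) window by window. The only difference is that you explicitly carry out the interval-counting step (via $\lfloor t/T\rfloor\geq t/T-1$) that the paper summarizes as ``exponential stability easily follows,'' and your resulting rate $\sigma_{0}\beta/(C_{2}T)$ is consistent with Lemma~\ref{lemma1}.
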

\begin{proof}\label{proof2}
	Note that the solution of (\ref{err_z_dot}) is given by the following:
	\begin{equation}
	\label{proof2_eq}
	e_{z}(t)=e^{-\int_{0}^{t}\sigma_{0}\frac{|w(\tau)|}{h(w(\tau))}d\tau}e_{z}(0),~~~~~~~~ \forall t \geq 0.
	\end{equation}
	It follows from (\ref{proof1_eq}) and (\ref{lemma2_eq}) that we have:
	\begin{equation}
	\label{proof2_eq2}
	\int_{t}^{t+T}\frac{|w(\tau)|}{h(w(\tau))}d\tau\geq\frac{\sigma_{0}}{C_{2}}\int_{t}^{t+T}|w(\tau)|d\tau\geq\frac{\sigma_{0}}{C_{2}}\beta>0.
	\end{equation}
	Then exponential stability easily follows from (\ref{proof2_eq}) and (\ref{proof2_eq2}). See e.g. \cite{odabas_TIMC.2021,odabas_IJC.2023} for similar results. 		
\end{proof}
\begin{remark}\label{remark1}
	Note that when $|w(t)|\geq\alpha$, (\ref{lemma2_eq}) is always satisfied. However, the converse is not true. Indeed, if $w(t)=sin(2\pi ft)$ for a frequency $f$, then (\ref{lemma2_eq}) is satisfied $T=\frac{1}{2f}$ and $\beta=\frac{1}{\pi f} $. Lemma \ref{lemma1} and \ref{lemma2} can be utilized in unit step and sinusoidal signal tracking respectively. 
\end{remark}
\begin{remark}\label{remark2}
	Clearly, (\ref{proof2_eq}) indicates that to have asymptotically stable error dynamics given by (\ref{err_z_dot}), the signal $w(.)$ should satisfy certain properties, as indicated in Lemma \ref{lemma1} and \ref{lemma2}. This is the reminiscent of ``persistency of excitation'' requirement frequently encountered in adaptive control theory, see e.g. \cite{sastry_adaptivebook.2011}. For example, if $w\in \mathbf{L_{1}}$, then $e_{z}(t)$ given by (\ref{proof2_eq}) converges to a finite limit; hence, error dynamics in (\ref{err_z_dot}) is not asymptotically stable. This shows that the observer given by (\ref{z_hat_dot}) is not suitable for arbitrary inputs $u$ in (\ref{mechanical_sys1}) or $v$ in (\ref{mechanical_sys2}).
\end{remark}

\begin{lemma}\label{lemma3}
	Assume that $w\in \mathbf{L_{\infty}}$. If the observer error dynamics given by (\ref{err_z_dot}) is asymptotically stable, then (\ref{F_est}) gives an asymptotically stable friction estimator, i.e. $\lim_{t \to \infty} \hat{F} = F$. If the observer error dynamics is exponentially stable, then (\ref{F_est}) gives an exponentially fast friction estimation, i.e. for some $C>0$ and $\delta>0$ $|F-\hat{F}| \leq Ce^{-\delta t}, \forall t \geq 0$ holds.
\end{lemma}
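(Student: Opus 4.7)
The plan is to reduce everything to the already-derived algebraic relation between $e_f$ and $e_z$, and then use the boundedness of $w(\cdot)$ together with the uniform bound on $h(w)$ to transfer stability from $e_z$ to $e_f$. First I would recall equation (\ref{F_err}), which gives the clean identity
\begin{equation*}
e_{f}(t) \;=\; \sigma_{0}\Bigl(1-\tfrac{|w(t)|}{h(w(t))}\Bigr)\,e_{z}(t).
\end{equation*}
The whole statement follows if I can show that the time-varying coefficient multiplying $e_z$ is uniformly bounded, because then $|e_f(t)|$ is bounded above by a constant multiple of $|e_z(t)|$.

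Next I would invoke the hypothesis $w\in\mathbf{L_{\infty}}$ to fix a constant $M>0$ with $|w(t)|\leq M$ for all $t\geq 0$, and combine it with the lower bound from (\ref{proof1_eq}), namely $h(w)\geq C_{1}>0$ (here I am using $F_c>0$, which is the physically meaningful case; if $F_c=0$ one could instead bound $h$ away from zero on the compact set $\{|w|\le M\}$ using its continuity and strict positivity). This gives
\begin{equation*}
\Bigl|\,1-\tfrac{|w(t)|}{h(w(t))}\Bigr| \;\leq\; 1+\tfrac{M}{C_{1}} \;=:\; K,
\end{equation*}
uniformly in $t$, and hence $|e_{f}(t)|\leq \sigma_{0}K\,|e_{z}(t)|$ for all $t\ge 0$.

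From this pointwise inequality both conclusions are immediate. If the error dynamics (\ref{err_z_dot}) is asymptotically stable, then $e_{z}(t)\to 0$, so $|e_{f}(t)|\to 0$ and $\hat{F}\to F$. If it is exponentially stable, so that $|e_{z}(t)|\leq C_{0} e^{-\delta t}$ for some $C_{0},\delta>0$, then $|e_{f}(t)|\leq C e^{-\delta t}$ with $C=\sigma_{0}KC_{0}$, which is the desired exponential convergence of $\hat{F}$ to $F$.

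The only real subtlety, and hence what I would treat as the main obstacle, is guaranteeing the strict lower bound $h(w)\geq C_{1}>0$ that makes the coefficient bounded; without $w\in\mathbf{L_{\infty}}$ the factor $|w|/h(w)$ could blow up and the implication from $e_z\to 0$ to $e_f\to 0$ would no longer be automatic. Everything else is a routine consequence of (\ref{F_err}) and the structure of $h(w)$ in (\ref{Lugre3}).
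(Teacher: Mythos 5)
Your argument is correct, and it is essentially the only natural proof: the paper itself states Lemma \ref{lemma3} without providing any proof, so there is nothing to compare against, but the intended reasoning is clearly the one you give — the identity (\ref{F_err}) reduces the claim to uniform boundedness of the coefficient multiplying $e_z$, which follows from $|w|\leq M$ and $h(w)\geq C_{1}>0$. One small point worth noting: substituting $\dot{e}_z=-\sigma_0\frac{|w|}{h(w)}e_z$ into $e_f=\sigma_0 e_z+\sigma_1\dot{e}_z$ actually yields $e_f=\sigma_0\bigl(1-\sigma_1\frac{|w|}{h(w)}\bigr)e_z$, so the displayed form of (\ref{F_err}) in the paper appears to drop a factor of $\sigma_1$; this is immaterial to your argument, since the corrected coefficient is bounded by $1+\sigma_1 M/C_1$ and the same two conclusions follow verbatim. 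Your side remark about handling the degenerate case $C_1=0$ (i.e.\ $F_c=0$) via compactness of $\{|w|\leq M\}$ is a nice touch that the paper glosses over, given that it only asserts $C_1\geq 0$.
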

\begin{remark}\label{remark3}
	Lemma \ref{lemma3} is the basic rationale behind using the method of controller design on friction compensation. Indeed, if Lemma \ref{lemma3} holds, then the controller design based on (\ref{sys1}), i.e. the frictionless case, could also be used in (4) to solve the same control objective by treating the friction estimation error as a decaying disturbance. 
\end{remark}
Unfortunately, high-quality data measurement and parameter identification are vital for this observer structure. As previously explained, parameter identification might be challenging and these coefficients can vary due to normal force variations or environmental effects such as temperature change. Moreover, in an overcompensated case with a PID controller, the integral part deals with the friction estimation error until the zero crossing. Thus, the estimated friction force changes rapidly as the velocity changes sign, resulting in an extra accelerating torque and a control error \cite{Olsson_CDC.1996}. \cite{caner_tez.2021} shows that a large integral coefficient exhibits an oscillatory behavior on tracking response when static friction parameters $F_{c}$ and $F_{s}$ are over-identified. To overcome this problem, \cite{Canudas_IEEETAC.1995} incorporate controller error, $e$ into (\ref{z_hat_dot}) as a correcting factor. $e$ can be defined as the difference between the measured position and applied position reference, i.e., $e=\theta-\theta_{app}$ for the position tracking objective or similarly the difference between the measured velocity and applied velocity reference, i.e.,  $e=w-w_{ref}$ for the velocity tracking objective. In addition to this modification, control law should include a feed-forward term as given in order to satisfy asymptotically stable position/velocity and estimation dynamics. Therefore, for a position control problem, the following observer-controller structure is proposed in \cite{Canudas_IEEETAC.1995}:
\begin{eqnarray}
\label{Canudas_z}
\dot{\hat{z}}&=&w-\sigma_{0}\frac{|w|}{h(w)}\hat{z}-ke, \\
u&=&-C(s)e+\hat{F}+J\frac{d^{2}\theta_{app}}{dt^{2}}, 		\label{Canudas_u}
\end{eqnarray}
where $k>0$ denotes the observer gain and $C(s)$ stands for a position controller. Also, note that although the feed-forward component, $Js^2$ or $d^{2}\theta_{app}/dt^{2}$,  is an optional part in Fig. \ref{sys}, the introduced position control strategy in \cite{Canudas_IEEETAC.1995} requires the feed-forward term in order to obtain error dynamics converging to zero. The feed-forward component is also a crucial requirement to determine the controller. Also according to \cite{Canudas_IEEETAC.1995}, $C(s)$ must be chosen appropriately to ensure $T(s)=\frac{\sigma_{1}s+\sigma_{0}}{Js^2+C(s)}$ is a strictly positive real (SPR) transfer function. Then the observer and position error will be zero asymptotically. Later, some extension studies inserted more complex functions into the observer structure instead of the correcting factor $ke$ to compensate for normal force, temperature or other environmental variations \cite{canudas_IFAC.1996,canudas_IEEEICDC.2003}.

First of all, (\ref{Canudas_z}) and (\ref{Canudas_u}) introduce an observer-controller structure. Hence, the proposed scheme is not designed for arbitrary input and, therefore, cannot be utilized as a standalone observer in a straightforward way. Secondly,  the SPR condition is a conservative constraint on the controller structure because of the following reasons:
\begin{enumerate}
	\item Some common controller types such as PIDs with pure integral operation, do not satisfy the SPR condition. To get around this problem, one can utilize PID with filtered integral action with a cut-off frequency $\tau$ such that $C(s)=K_{p}+K_{d}s+\frac{K_{i}}{(\tau s+1)}$ \cite{canudas_IFAC.1996}.	
	\item A feed-forward path generally enhances the performance of the system. However, although $T(s)$ requires a mandatory feed-forward term, one may not desire to employ such a path for some reasons.  
\end{enumerate}  
Related to the first reason, \cite{canudas_IJACSP.1997} argue that a PID controller without a filtering operation can fulfill the SPR requirement within a limited frequency interval. Surprisingly, this study experimentally indicates that the closed-loop system remains stable even out of this frequency range. Regardless, mathematically speaking, this is a gap in the stability analysis that deserves further research and should be avoided to guarantee the robustness and stability of the system. Consequently, this scheme requires a control input, $u$, in a particular form since it addresses estimation and control objectives together. Distinctively, our purpose is to develop a friction estimation scheme for an arbitrary control input suitable for complex feedback strategies such as nested loops.

We note that \cite{Canudas_Kelly_AJC.2007} utilized an approach similar to \cite{Canudas_IEEETAC.1995}. In the latter, the authors considered the non-adaptive version of the control methodology introduced in \cite{Slotine_IEEEToAC.1988} and extended it to similar mechanical systems with LuGre friction. An error term similar to the one given in (\ref{Canudas_z}), which contains both position and velocity terms, is utilized in the proposed observer structure. We note that the observer structures given in these references are not standalone observers, i.e. they are not designed to estimate the friction for arbitrary input, but they are a part of the proposed observer based controller structure where the controller to be utilized to produce the necessary input to achieve stabilization is designed by considering the passivity of the overall structure. On the other hand, our proposed observer structure is aimed at the friction estimation independent of the control input. Although both structures employ an error term $e$, meaning of the error is also different. While in the \cite{Canudas_IEEETAC.1995}, the error term is the difference between actual and desired velocities, which is given; in our case, the error is the difference between the actual and estimated velocities, which is part of the observer state and hence is not known a priori. In the sequel, we refer to the observer structure given by (\ref{Canudas_z})-(\ref{Canudas_u}) as the existing observer.
\subsection{Proposed Observer}
To improve the observer given by (\ref{z_hat_dot}), let us propose the following observer structure:
\begin{eqnarray}
\label{obs_1}
\dot{\hat{z}}&=&w-\sigma_{0}\frac{|w|}{h(w)}\hat{z}+K_{1}(w-\hat{w}),\\
\label{obs_2}
J\dot{\hat{w}}&=&-\sigma_{0}\hat{z}-\sigma_{1}\dot{\hat{z}}-F_{v}w+u+K_{2}(w-\hat{w}),
\end{eqnarray}
where $K_{1}$ and $K_{2}$ are observer parameters to be determined. $\hat{w}$ is the estimation of $w$, and $u$ is the control input used in (\ref{mechanical_sys1}). The first three terms in the right hand side of (\ref{obs_2}) are the friction estimation given by (\ref{F_est}). Note that here, we assume that $w$ is available from measurement, yet we propose an estimation for it, and the basic rationale behind this might not be apparent at this point. We will clarify the importance of (\ref{obs_2}) in the stability analysis in the sequel.

We have already defined the state estimation error, $e_{z}$. Likewise, we can define the velocity estimation error as $e_{w}=w-\hat{w}$. Then, by using (\ref{mechanical_sys1}), (\ref{LuGre1})-(\ref{LuGre2}), and (\ref{obs_1})-(\ref{obs_2});
\begin{eqnarray}
\label{err_z_dot_new}
\dot{e}_{z}&=&-\sigma_{0}\frac{|w|}{h(w)}e_{z}-K_{1}e_{w},\\
\label{err_w_dot_new}
\dot{e}_{w}&=& -\frac{\sigma_{0}}{J}e_{z}-\frac{\sigma_{1}}{J}\dot{e}_{z}-\frac{K_{2}}{J}e_{w}\nonumber \\
&=&-\frac{\sigma_{0}}{J}e_{z}+\frac{\sigma_{0}\sigma_{1}}{J}\frac{|w|}{h(w)}e_{z}-K_{3}e_{w}.
\end{eqnarray}
where $K_{3}=(K_{2}-\sigma_{1}K_{1})/J.$
In the sequel, we will show that by choosing suitable observer parameters, we may obtain various stability properties for the observer given by (\ref{obs_1})-(\ref{obs_2}). To elaborate further, let us choose the following Lyapunov function such $V=Ae_{z}^{2}+Ce_{w}^{2}$ where $A>0$, $C>0$ are constant real numbers.Then, we obtain
\begin{eqnarray}
\label{V_dot_new}
\dot{V} &=&2Ae_{z}\dot{e}_{z}+2Ce_{w}\dot{e}_{w} \nonumber \\ 
&=&-2A\sigma_{0}\frac{|w|}{h(w)}e_{z}^{2}-2CK_{3}e_{w}^{2}\nonumber \\ 
&+&\left( \frac{2C\sigma_{0}\sigma_{1}}{J}\frac{|w|}{h(w)}-\frac{2C\sigma_{0}}{J}-2AK_{1}\right)e_{w}e_{z}. 
\end{eqnarray}
Now for any $A>0$, $C>0$, let us choose $K_{1}$ and $K_{2}$ as
\begin{equation}
\label{K1}
K_{1} =\frac{C\sigma_{0}}{AJ}\left(\sigma_{1}\frac{|w|}{h(w)}-1\right),~~~~K_{2}=\alpha+\sigma_{1}K_{1}
\end{equation}
where $\alpha>0$ is an arbitrary constant. Hence, we obtain $K_{3}=\alpha/J>0$.
\begin{lemma}\label{lemma4}
	Consider the system in (\ref{mechanical_sys1}), the friction model given by (\ref{LuGre1})-(\ref{Lugre3}), and the observer defined as (\ref{obs_1})-(\ref{obs_2}) with the parameters chosen as in (\ref{K1}),
	\begin{enumerate}
		\item If $w\in \mathbf{L_{\infty}}$, then $e_{w}\to 0$ as $t\to \infty$.\\
		\item If $w$ satisfies the conditions given in Lemma \ref{lemma1} or Lemma \ref{lemma2}, then the observer error dynamics given by (\ref{err_z_dot_new})-(\ref{err_w_dot_new}) are exponentially stable.
	\end{enumerate}	
\end{lemma}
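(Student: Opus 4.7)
The plan is to leverage the Lyapunov function $V = A e_z^2 + C e_w^2$ introduced just before the lemma. The first move is to substitute the prescribed $K_1$ and $K_2$ from (\ref{K1}) into the expression for $\dot V$ in (\ref{V_dot_new}) and observe that, by construction, the cross-term coefficient collapses to zero, leaving the clean form
\[
\dot V = -2A\sigma_0\frac{|w|}{h(w)}e_z^2 - \frac{2C\alpha}{J}e_w^2 \leq 0.
\]
This single identity drives the rest of the proof: $V$ is monotonically non-increasing, so in particular $e_z, e_w \in \mathbf{L_{\infty}}$, and both error components are damped, although the damping on $e_z$ is modulated by the time-varying weight $|w|/h(w)$.

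For part (i), integrating the contribution $-\tfrac{2C\alpha}{J}e_w^2$ gives $e_w \in \mathbf{L_{2}}$ directly. It then remains to show that $\dot e_w$ is uniformly bounded so that Barbalat's lemma applies. This follows from (\ref{err_w_dot_new}) together with $w \in \mathbf{L_{\infty}}$, the lower bound $h(w) \geq \min\{F_c,F_s\} > 0$ that makes $|w|/h(w)$ bounded, and the boundedness of $e_z, e_w$ just established.

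For part (ii) under the hypothesis of Lemma \ref{lemma1}, combining $|w| \geq \alpha_0$ with $h(w) \leq C_2$ from (\ref{proof1_eq}) immediately gives $\dot V \leq -\gamma V$ with $\gamma = \min\{2\sigma_0\alpha_0/C_2,\,2\alpha/J\} > 0$, and a Gronwall argument produces the claimed exponential decay of both $e_z$ and $e_w$. Under the weaker hypothesis of Lemma \ref{lemma2}, I would instead seek a discrete-time contraction $V(t+T) \leq \rho V(t)$ with some $\rho \in (0,1)$ uniform in $t$. Integrating the expression for $\dot V$ over $[t,t+T]$ bounds $V(t) - V(t+T)$ below by $2A\sigma_0 \int_t^{t+T}\tfrac{|w|}{h(w)}e_z^2\,d\tau$ plus $\tfrac{2C\alpha}{J}\int_t^{t+T}e_w^2\,d\tau$; monotonicity of $V$ confines the trajectory on the interval to $V(s)\in[V(t+T),V(t)]$, and the persistency lower bound $\int_t^{t+T}|w|/h(w)\,d\tau \geq \beta/C_2$ inherited from (\ref{proof2_eq2}) should convert the first integral into a fixed multiple of $V(t+T)$, yielding the desired contraction.

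The main obstacle is precisely this Lemma \ref{lemma2} case, because the $e_z$ damping can switch off on long sub-intervals, so a pointwise Gronwall comparison is unavailable. The technically delicate piece is controlling how much $(e_z(\tau), e_w(\tau))$ can deviate from $(e_z(t), e_w(t))$ on $[t,t+T]$: the idea is to use boundedness of the error-system matrix (which rests on $w \in \mathbf{L_{\infty}}$, an implicit hypothesis I would add) to bound the deviation by $O(V(t)^{1/2})$ with an $O(T)$ constant, so that the persistency integral can be evaluated, up to controlled error, at $e_z(t)$. The absolute decay coming from the $-\tfrac{2C\alpha}{J}e_w^2$ term then pins the $e_w$ component, and together these give a $\rho$ strictly below one that is independent of the initial state.
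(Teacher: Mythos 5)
Your proof of part (i) is exactly the paper's: the same Lyapunov function $V=Ae_z^2+Ce_w^2$, the same cancellation of the cross term under (\ref{K1}) yielding $\dot V=-2A\sigma_0\frac{|w|}{h(w)}e_z^2-\frac{2C\alpha}{J}e_w^2$, then $e_w\in\mathbf{L_{2}}\cap\mathbf{L_{\infty}}$, $\dot e_w\in\mathbf{L_{\infty}}$, and Barbalat's lemma. For part (ii) you go considerably further than the paper, which disposes of it in one sentence (``easily follows from the proof of Lemma \ref{lemma1} or Lemma \ref{lemma2}''). Your Gr\"onwall bound $\dot V\le-\gamma V$ with $\gamma=\min\{2\sigma_0\alpha_0/C_2,\,2\alpha/J\}$ settles the Lemma \ref{lemma1} branch completely and correctly.

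The Lemma \ref{lemma2} branch, however, remains a sketch, and the step you flag as delicate is genuinely missing. To obtain $V(t+T)\le\rho V(t)$ it is not enough to evaluate the persistency integral at a frozen $e_z$: you must also bound $\int_t^{t+T}e_w^2\,d\tau$ from below in terms of the state at time $t$ (otherwise, when $V(t)$ is carried almost entirely by $e_w(t)$ while $e_z(t)\approx 0$, your lower bound on $V(t)-V(t+T)$ degenerates and yields no contraction), and you must control the sign of the bracketed quantity obtained after subtracting the $e_w$-perturbation from the homogeneous part of $e_z$. The standard way to close this is a uniform-complete-observability argument for the LTV error system with output $\bigl(\sqrt{|w|/h(w)}\,e_z,\,e_w\bigr)$, which requires the additional hypothesis $w\in\mathbf{L_{\infty}}$ that you correctly identify as implicit. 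To be fair, the paper does not supply any of this either: Lemma \ref{lemma2}'s own proof rests on the explicit scalar solution (\ref{proof2_eq}), which has no analogue for the coupled system (\ref{err_z_dot_new})--(\ref{err_w_dot_new}), so the paper's claim that this case ``easily follows'' is itself a gap that your proposal makes visible rather than introduces.
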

\begin{proof}
	From (\ref{V_dot_new})-(\ref{K1}), it follows that $\dot{V}=-2A\sigma_{0}\frac{|w|}{h(w)}e_{z}^{2}-2C\frac{\alpha}{J}e_{w}^{2}.$ Hence, $\dot{V}\leq 0$ and we have $e_{z}\in \mathbf{L_{\infty}}$, $e_{w}\in \mathbf{L_{\infty}}$. Also, by integrating $\dot{V}$, we obtain $2C\frac{\alpha}{J}\int_{0}^{t}e_{w}^{2}dt \leq V(0).$ Hence, $e_{w} \in \mathbf{L_{2}}$. If $w\in \mathbf{L_{\infty}}$, from (\ref{err_w_dot_new}), it follows that $\dot{e}_{w}\in \mathbf{L_{\infty}}$ as well. Hence, by Barbalat's Lemma, it follows that $e_{w}\to 0$ as $t\to \infty$. The proof of $(ii)$ easily follows from the proof of Lemma \ref{lemma1} or Lemma \ref{lemma2}. 
\end{proof}
At this point, the advantage of the observer given by (\ref{obs_1})-(\ref{obs_2}) over the one given in (\ref{z_hat_dot}) is unclear. To see the difference between these observers, consider the case where $w(t)\equiv0$. In the existing design, (\ref{z_hat_dot}) implies $\dot{e}_z=0$, hence $e_{z}(t)=e_{z}(0) \in \mathbf{L_{\infty}}$. However, in the proposed observer case given by (\ref{err_z_dot_new})-(\ref{err_w_dot_new}), (\ref{K1}) implies $K_{1}=-C\sigma_{0}/AJ$ and $K_{3}=\alpha/J>0$. Therefore, it becomes exponentially stable. This simple observation can be extended to the case where $w\in \mathbf{L_{\infty}}$, which will be exploited below.

Assume that $w\in \mathbf{L_{\infty}}$, and we have $|w(t)|\leq M$ for some $M>0.$ If $K_{1}$ and $K_{2}$ are chosen as in (\ref{K1}), by using (\ref{proof1_eq}), we obtain:
\begin{align}
\label{eq1}
\sigma_{1}K_{1}&=\sigma_{1}\frac{C\sigma_{0}}{AJ}\left(\frac{|w|}{h(w)}-1 \right) \leq \sigma_{1}\frac{C\sigma_{0}}{AJ}\left(\frac{M}{C_{1}}-1 \right),\\
\label{eq2}
K_{2}&=\alpha+\frac{C}{AJ}\sigma_{0}\sigma_{1}\left(\frac{M}{C_{1}}-1 \right).
\end{align}
By choosing $\alpha>0$, we have $K_{2}>0$ and $JK_{3}=K_{2}-\sigma_{1}K_{1}\geq \alpha$ as well.
\begin{lemma}\label{lemma5}
	Consider the system given by (\ref{mechanical_sys1}), the friction model given by (\ref{LuGre1})-(\ref{Lugre3}), and the observer given by (\ref{obs_1})-(\ref{obs_2}). Assume that $w\in \mathbf{L_{\infty}}$ and the observer parameters are chosen as in (\ref{K1}) and (\ref{eq2}). Then, $e_{w}$, $e_{z}\in \mathbf{L_{2}}$. Moreover, the error dynamics given by (\ref{err_z_dot_new})-(\ref{err_w_dot_new}) is asymptotically stable, i.e. $e_{w}\to 0$ and $e_{z}\to 0$ as $t\to \infty$.
\end{lemma}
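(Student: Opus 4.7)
My plan is to lean on the analysis from Lemma~\ref{lemma4} and then introduce an auxiliary variable that linearizes the combined error dynamics once $K_2$ is chosen constant. First I would recompute $\dot{V}$ for the Lyapunov function $V=Ae_z^2+Ce_w^2$ exactly as in the proof of Lemma~\ref{lemma4}. With $K_2$ chosen as in (\ref{eq2}) and using $|w|\le M$ together with $C_1\le h(w)\le C_2$, a short calculation shows
\begin{equation*}
K_3(t)=\frac{K_2-\sigma_1 K_1(t)}{J}\ge \frac{\alpha}{J}>0
\end{equation*}
for all $t$. Therefore $\dot{V}\le -2A\sigma_0\frac{|w|}{h(w)}e_z^2 - 2C\frac{\alpha}{J}e_w^2\le 0$, which at once yields $e_z,e_w\in\mathbf{L_{\infty}}$ and, after integration, $e_w\in\mathbf{L_{2}}$. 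Since $w\in\mathbf{L_{\infty}}$ and $e_z,e_w$ are bounded, (\ref{err_w_dot_new}) makes $\dot{e}_w$ bounded, so Barbalat's lemma yields $e_w\to 0$ as $t\to\infty$.

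The nontrivial step is upgrading this to $e_z\to 0$ and $e_z\in\mathbf{L_{2}}$. My approach is to introduce the auxiliary variable $u:=\sigma_1 e_z+Je_w$. A direct computation from (\ref{err_z_dot_new})-(\ref{err_w_dot_new}), together with the identity $JK_3=K_2-\sigma_1 K_1$, shows that the $|w|/h(w)$ terms and the $K_1 e_w$ terms cancel identically, producing the clean relation $\dot{u}=-\sigma_0 e_z-K_2 e_w$. Substituting $e_z=(u-Je_w)/\sigma_1$ then converts this into a scalar LTI filter
\begin{equation*}
\dot{u}+\frac{\sigma_0}{\sigma_1}\,u=\left(\frac{\sigma_0 J}{\sigma_1}-K_2\right)e_w,
\end{equation*}
whose pole $-\sigma_0/\sigma_1<0$ is strictly stable. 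This reduction is precisely where the \emph{constant} choice of $K_2$ from (\ref{eq2}) is crucial: under Lemma~\ref{lemma4}'s time-varying $K_2=\alpha+\sigma_1 K_1(t)$ the coefficient multiplying $e_w$ would itself be time-varying, and the LTI argument below would break down.

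Standard linear-filter arguments then close the proof. The input $e_w$ lies in $\mathbf{L_{2}}\cap\mathbf{L_{\infty}}$ with $e_w\to 0$; passing it through the stable first-order filter above forces $u$ to inherit the same three properties (the $\mathbf{L_{2}}$ bound from Young's convolution inequality applied to the $\mathbf{L_{1}}$ impulse response $e^{-\sigma_0 t/\sigma_1}$, and $u\to 0$ by splitting the convolution integral into transient and tail pieces and using $e_w\to 0$). Finally, inverting $u=\sigma_1 e_z+Je_w$ gives $e_z=(u-Je_w)/\sigma_1$, which inherits both $\mathbf{L_{2}}$ membership and convergence to zero from $u$ and $e_w$. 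The main obstacle I anticipate is the bookkeeping required to verify the cancellation $\sigma_1\dot{e}_z+J\dot{e}_w=-\sigma_0 e_z-K_2 e_w$; once that identity is in hand, the remainder is a routine application of scalar linear-systems theory.
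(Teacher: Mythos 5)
Your proof is correct, and at its core it is the same argument as the paper's: both exploit the fact that, once $K_{2}$ is frozen at the constant value in (\ref{eq2}), the relation $\sigma_{1}\dot{e}_{z}+\sigma_{0}e_{z}=-(J\dot{e}_{w}+K_{2}e_{w})$ extracted from (\ref{err_w_dot_new}) lets you treat $e_{z}$ as the output of a stable first-order filter driven by $e_{w}\in\mathbf{L_{2}}$. The difference is one of realization. The paper works in the frequency domain: it writes $E_{z}(s)=-\frac{Js+K_{2}}{\sigma_{1}s+\sigma_{0}}E_{w}(s)$, invokes $\mathbf{L_{2}}$-stability of this biproper transfer function and Parseval to get $e_{z}\in\mathbf{L_{2}}$, and then needs a separate appeal to Barbalat's lemma to conclude $e_{z}\to 0$. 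You work in the time domain with the auxiliary state $u=\sigma_{1}e_{z}+Je_{w}$, which is precisely a state-space realization of the same transfer function with the direct-feedthrough term split off; your convolution estimates (Young's inequality for the $\mathbf{L_{2}}$ bound, tail-splitting for convergence) deliver $e_{z}\in\mathbf{L_{2}}$ and $e_{z}\to 0$ in one stroke and also account explicitly for initial conditions, which the Laplace argument glosses over. One small caveat: your remark that the reduction would ``break down'' under the time-varying $K_{2}=\alpha+\sigma_{1}K_{1}(t)$ of Lemma~\ref{lemma4} overstates the case, since the filter pole $-\sigma_{0}/\sigma_{1}$ does not depend on $K_{2}$ and a bounded time-varying $K_{2}(t)$ would still place the forcing term in $\mathbf{L_{2}}$; this does not, however, affect the validity of your proof of the lemma as stated.
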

\begin{proof}
	We now have	$\dot{V} \leq -2A\sigma_{0}\frac{|w|}{h(w)}e_{z}^{2}-2C\frac{\alpha}{J} e_{w}^{2}.$	Similar to Lemma \ref{lemma4}, we have $e_{w}$, $\dot{e}_{w} \in \mathbf{L_{\infty}}$ and $e_{w}\in \mathbf{L_{2}}$; hence, by Barbalat's lemma, we have $e_{w}\to 0$ as $t\to \infty$. Now consider the error equation given by (\ref{err_w_dot_new}). By taking the Laplace transform, we can rewrite this equation as
	\begin{equation}
	\label{err_z_lap}
	E_{z}(s)=-\frac{Js+K_{2}}{\sigma_{1}s+\sigma_{0}}E_{w}(s)=-G(s)E_{w}(s),
	\end{equation}
	where $E_{w}$ and $E_{z}$ refers to the Laplace transforms of $e_{w}(.)$ and $e_{z}(.)$ respectively. Hence, we can view (\ref{err_w_dot_new}) as an LTI system with input $e_{w}$, output $e_{z}$ and a transfer function $G(s)$ given by (\ref{err_z_lap}). Note that $G(s)$ is stable and $|G(j2\pi f)|<C$ for some $C>0$. Hence, it is $\mathbf{ L_{2}}$ stable. By Parsavel's Theorem, it then follows that $e_{z}\in \mathbf{ L_{2}}$ as well, see e.g. Theorem 5.4 in \cite{khalil_nonlinear.2002}. Since $e_{z}\in \mathbf{ L_{\infty}}$, it follows from Barbalat's Lemma that $e_{z}(t)\to 0$ as well. 
\end{proof}
\begin{remark}\label{remark4}
	Lemma \ref{lemma5} shows the primary difference between the observer given by (\ref{z_hat_dot}) and our proposed observer given by (\ref{obs_1})-(\ref{obs_2}). If $K_{2}>0$ is a constant and $e_{w}\in \mathbf{ L_{2}}$, the interpretation of (\ref{err_w_dot_new}) as an LTI system given by (\ref{err_z_lap}) enables us to conclude that $e_{z}\in \mathbf{ L_{2}}$ as well. This also justifies the utilization of a velocity estimation despite it is measurable.  Note that this result is independent of the input $u(.)$ used in (\ref{mechanical_sys1}) as long as $w\in \mathbf{ L_{\infty}}$. 
\end{remark}
Lemma \ref{lemma5} gives us a result of the asymptotic stability of the observer given by (\ref{obs_1})-(\ref{obs_2}). Next, we will show that we can achieve exponential stability by choosing appropriate observer parameters. Now consider the following Lyapunov function $V=Ae_{z}^{2}+Be_{z}e_{w}+Ce_{w}^2$ for the system given by (\ref{err_z_dot_new})-(\ref{err_w_dot_new}) where $A$, $B$ and $C$ are some constants to be determined later. Clearly, $V$ is a positive function if $A>0,~C>0,~B^{2}-4AC<0.$ Using (\ref{err_z_dot_new}), (\ref{err_w_dot_new}) and derivative of $V$, we obtain:
\begin{eqnarray}
\label{new_V_dot}
\dot{V}&=&-\sigma_{0}\left( 2A-\frac{B\sigma_{1}}{J}\right) \frac{|w|}{h(w)}e_{z}^{2}-\frac{B\sigma_{0}}{J}e_{z}^{2}\nonumber \\ &-&(BK_{1}+2CK_{3})e_{w}^{2}+\sigma_{0}\left(\frac{2C\sigma_{1}}{J}-B\right)\frac{|w|}{h(w)}e_{z}e_{w}\nonumber \\
&-&\left(BK_{3}+\frac{2C\sigma_{0}}{J}+2AK_{1}\right) e_{z}e_{w}.
\end{eqnarray}
Now, we choose $A$, $B$, $C$, $K_{1}$ and $K_{3}$ so that the following equations are satisfied and V is positive.
\begin{eqnarray}
2A\geq \frac{B\sigma_{1}}{J}>0, \label{ineq1}\\
BK_{1}+2CK_{3}>0\label{ineq2},\\
\frac{2C\sigma_{1}}{J}-B=0,\label{ineq3}\\
BK_{3}+\frac{2C\sigma_{0}}{J}+2AK_{1}B=0.\label{ineq4}
\end{eqnarray}
By straightforward calculations, it can be shown that $V$ and (\ref{ineq1})-(\ref{ineq2}) have solutions. In the sequel, we will show a step-by-step solution.
\begin{itemize}
	\item Choose any $C>0$, $\alpha>0$, $\beta>0$.\\
	\item Choose $B$ and $A$ such that
	\begin{equation}
	B=\frac{2C\sigma_{1}}{J}>0,~~~~~~ A= \frac{B\sigma_{1}}{2J}+\frac{\beta}{2} \label{AB}
	\end{equation}
	\item Choose $K_{1}$ and $K_{3}$ as
	\begin{equation}
	K_{1}=- \frac{2C(\sigma_{1}\alpha+\sigma_{0})}{\beta J},~~~~K_{3}=-\frac{\sigma_{1}K_{1}}{J}+\alpha \label{K3_new}.
	\end{equation}
\end{itemize}
Therefore, we have $K_{2}= JK_{3}+\sigma_{1}K_{1}=J\alpha$. Note that with this selection we have $B^{2}-4AC= -2\beta C \label{eq_1}.$ Hence, $V$ becomes a positive definite function. Likewise, $ BK_{1}+2CK_{3}=2C\alpha>0$ and $BK_{3}+\frac{2C\sigma_{0}}{J}+2AK_{1}=\frac{2C}{J}(\sigma_{1}\alpha+\sigma_{0})+\beta K_{1}=0$.
\begin{lemma}\label{lemma6}
	Consider the system given in (\ref{mechanical_sys1}), the friction model given by (\ref{LuGre1})-(\ref{Lugre3}), and the observer given by (\ref{obs_1})-(\ref{obs_2}). Let the observer parameters be chosen as in the step-by-step solution where $\alpha>0$ and $\beta>0$ are arbitrary constants. Then, the error dynamics given by (\ref{err_z_dot_new})-(\ref{err_w_dot_new}) is exponentially stable.
\end{lemma}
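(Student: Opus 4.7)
The plan is to exploit the fact that the step-by-step construction of $A$, $B$, $C$, $K_1$, $K_3$ has been engineered precisely so that the cross-term coefficients in (\ref{new_V_dot}) vanish via (\ref{ineq3}) and (\ref{ineq4}), while (\ref{ineq1}) and (\ref{ineq2}) keep the remaining diagonal coefficients strictly positive. Once those cancellations are spelled out, $\dot V$ reduces to a negative definite quadratic form in $(e_z,e_w)$, and exponential stability follows by a standard Lyapunov comparison $\dot V\le -\gamma V$.

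First, I would verify that the chosen constants indeed satisfy (\ref{ineq1})--(\ref{ineq4}). With $B=2C\sigma_1/J$, condition (\ref{ineq3}) is immediate. The choice $A=B\sigma_1/(2J)+\beta/2$ from (\ref{AB}) yields $2A-B\sigma_1/J=\beta>0$, so (\ref{ineq1}) holds strictly. Substituting the $K_1$ from (\ref{K3_new}) into $BK_3+2C\sigma_0/J+2AK_1$ makes this quantity vanish, verifying (\ref{ineq4}); and $BK_1+2CK_3=2C\alpha>0$ gives (\ref{ineq2}). All these identities are already noted in the text immediately before the lemma, so the verification is routine bookkeeping.

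Second, I would substitute into (\ref{new_V_dot}). Both cross-term coefficients disappear and the expression collapses to
\[
\dot V \;=\; -\sigma_0\beta\,\frac{|w|}{h(w)}\,e_z^2 \;-\; \frac{B\sigma_0}{J}\,e_z^2 \;-\; 2C\alpha\,e_w^2.
\]
Because $|w|/h(w)\ge 0$ always, the first term only helps; dropping it leaves the uniform estimate
\[
\dot V \;\le\; -\frac{2C\sigma_0\sigma_1}{J^2}\,e_z^2 \;-\; 2C\alpha\,e_w^2.
\]
The critical point, foreshadowed in Remark \ref{remark4}, is that the $e_z^2$ coefficient is strictly negative even when $w\equiv 0$; this is precisely what separates the proposed observer from the natural one in (\ref{z_hat_dot}) and removes any persistency-of-excitation requirement.

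Third, I would convert this into an exponential bound. Since $B^2-4AC=-2\beta C<0$ and $A,C>0$, the quadratic form $V$ is positive definite, so there exist $\lambda_{\min},\lambda_{\max}>0$ with $\lambda_{\min}\|e\|^2\le V\le \lambda_{\max}\|e\|^2$ where $e=(e_z,e_w)^\top$. Combining with $\dot V\le -\mu\|e\|^2$ for $\mu=\min\{2C\sigma_0\sigma_1/J^2,\,2C\alpha\}>0$ yields $\dot V\le -(\mu/\lambda_{\max})V$, and Gr\"onwall's inequality then gives $\|e(t)\|\le \sqrt{\lambda_{\max}/\lambda_{\min}}\,\|e(0)\|\,e^{-\gamma t/2}$ with $\gamma=\mu/\lambda_{\max}$. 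I do not anticipate a genuine obstacle here: the difficulty was absorbed into designing the parameters, and what remains is quadratic-form algebra. The one place that merits care is confirming that the decay rate is uniform in $w(\cdot)$, which is clear because after discarding the nonnegative $|w|/h(w)$ term the bound on $\dot V$ has constant coefficients.
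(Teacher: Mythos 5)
Your proposal is correct and follows essentially the same route as the paper: verify that the step-by-step parameter choices satisfy (\ref{ineq1})--(\ref{ineq4}), collapse (\ref{new_V_dot}) to $\dot{V}=-\sigma_{0}\beta\frac{|w|}{h(w)}e_{z}^{2}-\frac{B\sigma_{0}}{J}e_{z}^{2}-2C\alpha e_{w}^{2}$, and conclude from the quadratic positive definite $V$ and negative definite bound on $\dot{V}$. The only difference is that you spell out explicitly the eigenvalue-bound/Gr\"onwall step that the paper delegates to Theorem 4.10 of \cite{khalil_nonlinear.2002}.
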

\begin{proof}
	Let $C>0$, $A$ and $B$ as given by (\ref{AB}) and consider the aforementioned positive definite Lyapunov function $V$. From (\ref{new_V_dot}), we obtain $\dot{V}=-\sigma_{0}\beta\frac{|w|}{h(w)}e_{z}^{2}-\frac{B\sigma_{0}}{J}e_{z}^{2}-2C\alpha e_{w}^{2}$ which proves that $-\dot{V}$ is positive definite as well. Since $V$ and $-\dot{V}$ are quadratic, exponential stability follows from standard Lyapunov Stability Theorems, e.g. Theorem 4.10 in \cite{khalil_nonlinear.2002}. 
\end{proof}

\section{Simulations}
We conduct simulations for system parameters provided in Table~\ref{fric_params} \cite{canudas_IFAC.1996,canudas_IJACSP.1997}. First, we consider a velocity control system, as in Fig. \ref{sys}(a), without any feed-forward compensator. For $K_{p}=1.6$ and $K_{i}=0.16$, we obtain a PI controller with around $115$ Hz bandwidth at $-3$ dB.

\begin{table}[t]
	\caption{Simulation parameters}
	\label{fric_params}
	\vspace{2mm}
	\centerline{
		\begin{tabular}{lccc}
			\hline\hline
			Parameter & Notation & value & Unit \\
			\hline\hline
			Stribeck velocity & $v_{s}$& $0.01$ &  $rad/s$\\
			Stiffness coefficient&$\sigma_{0}$& $260$ & $N.m/rad$\\
			Damping coefficient&$\sigma_{1}$& $0.6$ & $N.m.s/rad$\\	
			Coulomb friction& $F_{c}$&$0.285$ & $N.m$\\ 
			Stick friction  & $F_{s}$&$0.335$ & $N.m$\\
			Viscous friction  & $F_{v}$&$0.018$ & $N.m.s/rad $\\
			Total inertia		&$J$&$0.0022$ & $kg.m^{2}$\\ 
			\hline\hline
	\end{tabular}}
\end{table}

\begin{figure}[b]
	\centering
	\includegraphics[width=1\columnwidth]{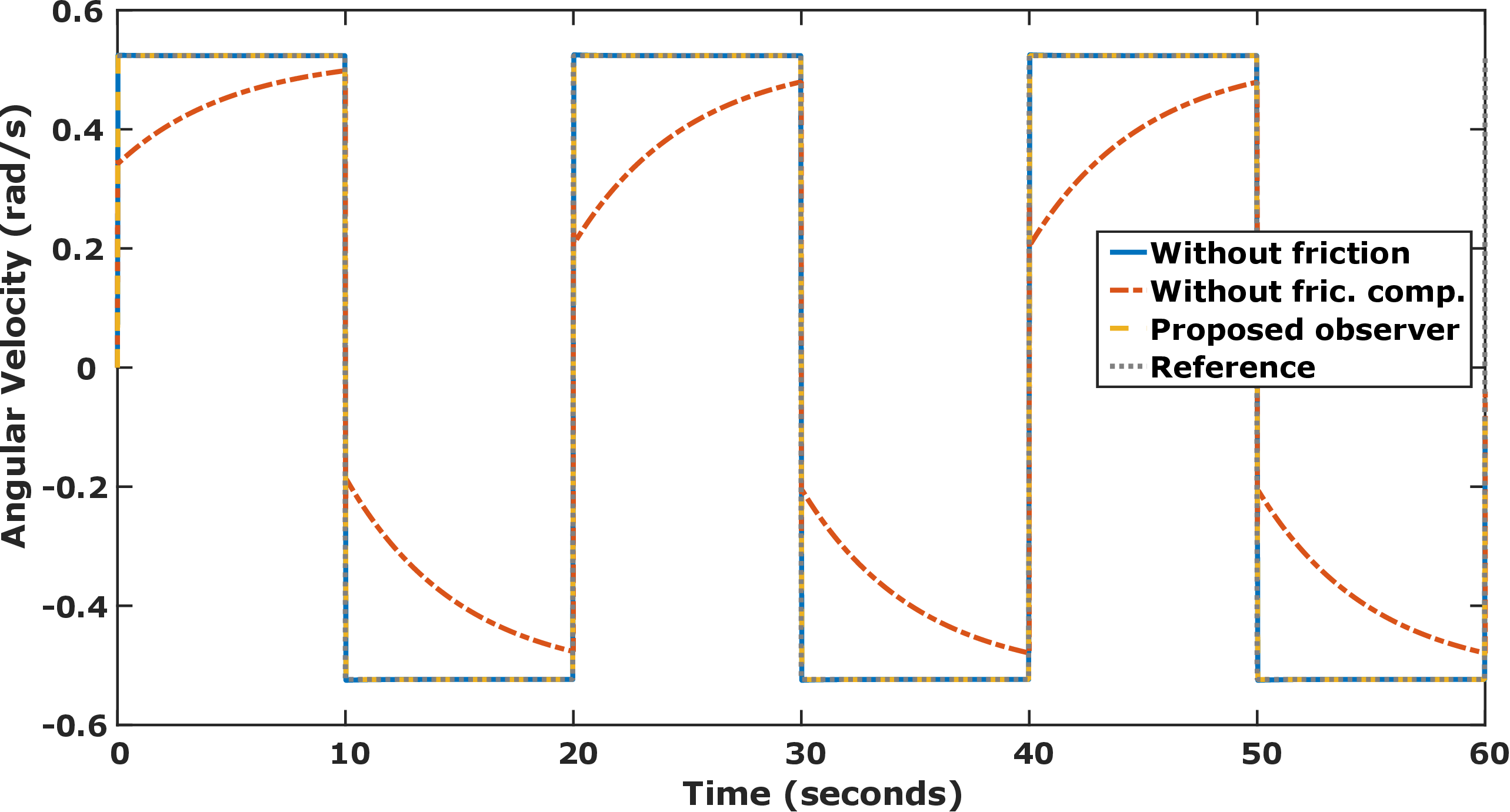}
	\caption{\label{vel_response} Velocity tracking response of the system with the proposed observer under LuGre Friction.}
\end{figure}

As it can be observed in Fig. \ref{vel_response}, although the PI controller provides a promising velocity tracking response when there is no friction, it exhibits severe performance degradation under the presence of LuGre Friction. Clearly, the proposed observer scheme is a remedy to compensate for the tracking performance due to LuGre friction. Indeed, a large integral gain improves the disturbance rejection performance, especially at lower frequencies. In this sense, the integral coefficient can be increased by aiming at friction compensations. On the other hand, although the integral part eliminates the steady state, it may also introduce significant phase lag and overshoot. Moreover, the integral gain may not be apparent in the command response until it is excessive to cause peaking and instability \cite{ellis_book.2012}. Hence, in such cases, estimating the disturbance with an observer is a widespread approach to enhance the disturbance performance. Motivated by this fact, we combine our friction observer structure with a moderate velocity controller to demonstrate the technique's efficacy. 
In these simulations, we designate $K_{1}=-10.24$ in (\ref{obs_1}) and $K_{2}=22$ in (\ref{obs_2}) while $k=0.35$ in (\ref{Canudas_z}). 


\begin{figure}[b]
	\centering
	\includegraphics[width=1\columnwidth]{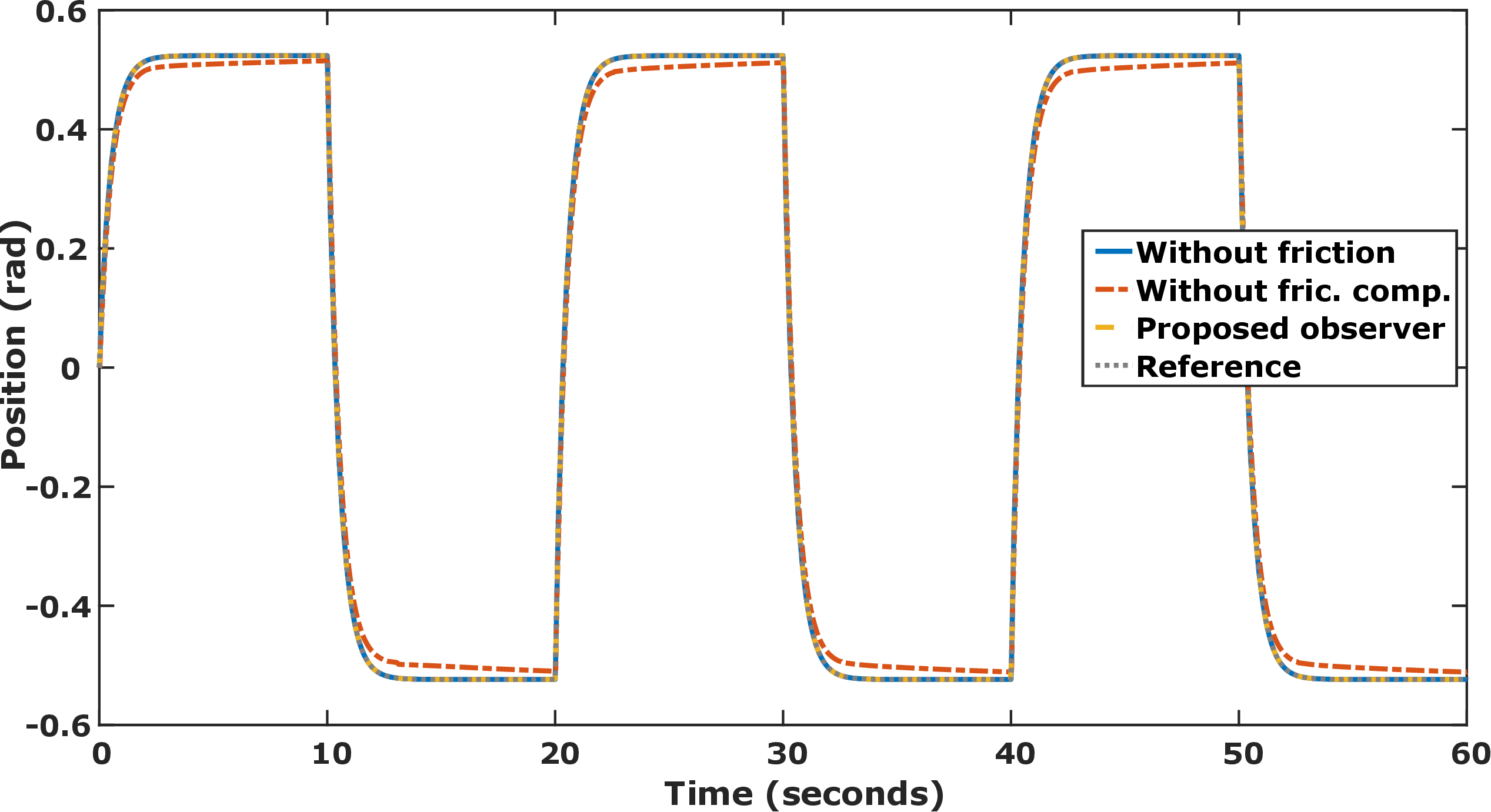}
	\caption{\label{pos_res}Position tracking response of the system with the proposed observer under LuGre friction.}
\end{figure}

Then, we also consider a position control system, as in Fig. \ref{sys}(b). For the conducted position response simulations, we determine the coefficients of the PI controller as $K_{p}=15$ and $K_{i}=1.55$. Thus, approximately $20$ Hz controller bandwidth is achieved. In a position-tracking scenario, the output signal exhibits an oscillatory transient response due to the integral term for aggressive reference inputs. Therefore, a pre-filter shown in Fig. \ref{sys}(b) is generally utilized to smoothen the reference signal. Hence, we adopt a low-pass filter, $H(s) = 2/(s+2)$. In Fig. \ref{pos_res}, it can be distinguished that the proposed observer structure also performs well in a position-tracking objective. 



\section{Conclusion}
The LuGre friction model is vastly used in the literature for friction modeling and compensation. Motivated by this fact, we propose a new observer design based on the LuGre friction model and provide the necessary conditions for the asymptotic stability of our observer by using Lyapunov functions. Uniquely, we used a velocity error in this design to enhance the friction estimation performance. Our simulations demonstrate that the introduced observer structure improves the performance of the velocity and position control systems in the presence of LuGre friction. Moreover, the developed observer is a stand-alone design compared with a benchmark friction observer. Therefore, the main contribution of this paper is to provide an LuGre friction observer structure that can be employed with any controller for friction compensation in mechanical systems.

\bibliographystyle{IEEEtran}
\bibliography{lcsys}   

\end{document}